\documentclass[sn-mathphys-num]{sn-jnl}

\usepackage{graphicx}
\usepackage{amsmath,amssymb,amsfonts}
\usepackage{mathtools}
\usepackage{bm}
\usepackage{booktabs}
\usepackage{array}
\usepackage{microtype}
\usepackage{tikz}
\usetikzlibrary{arrows.meta,calc,positioning,fit}
\definecolor{geomblue}{RGB}{37,99,164}
\definecolor{geomorange}{RGB}{214,113,35}
\definecolor{geomred}{RGB}{174,43,50}
\definecolor{geomgray}{RGB}{84,92,101}
\definecolor{resolvedgreen}{RGB}{0,128,0}

\theoremstyle{thmstyleone}
\newtheorem{proposition}{Proposition}

\theoremstyle{thmstyletwo}

\theoremstyle{thmstylethree}

\newcommand{\dd}{\mathrm d}
\newcommand{\trans}{\mathsf T}
\newcommand{\Vspace}{\mathsf V}
\newcommand{\Hspace}{\mathsf H}
\newcommand{\Hor}{\operatorname{Hor}}

\newcommand{\Cov}{\operatorname{Cov}}
\newcommand{\Var}{\operatorname{Var}}
\newcommand{\argmin}{\operatorname*{arg\,min}}
\newcommand{\id}{\operatorname{id}}
\newcommand{\diag}{\operatorname{diag}}

\newcommand{\pr}{\operatorname{pr}}
\newcommand{\GO}{\mathrm{GO}}
\newcommand{\NP}{\mathrm{NP}}
\newcommand{\prof}{\mathrm{prof}}
\newcommand{\aux}{\mathrm{aux}}

\newcommand{\stat}{\mathrm{stat}}
\newcommand{\syst}{\mathrm{syst}}
\newcommand{\E}{\mathbb E}

\newcommand{\given}{\,;\,}

\makeatletter
\let\orcidlogo\@undefined
\makeatother
\usepackage{orcidlink}

\begin{document}

\title[The geometry of uncertainty decomposition in profile-likelihood fits]
{The geometry of uncertainty decomposition in profile-likelihood fits}

\author{\fnm{Rafael} \sur{Coelho Lopes de S\'a}\, \orcidlink{0000-0001-5200-9195}}
\email{rclsa@umass.edu}

\affil{\orgdiv{Department of Physics},
  \orgname{University of Massachusetts},
  \orgaddress{\city{Amherst}, \postcode{01003}, \state{MA},
  \country{USA}}}

\abstract{Uncertainty decompositions in profile-likelihood fits are commonly reported through nuisance-parameter impacts, although shifting a fitted parameter and fluctuating the observation that constrains it answer different questions. Recently, Pinto et al. provided an explicit construction for uncertainty decomposition based on fluctuating the observations and argued that such a construction allows for a cleaner interpretation of systematic uncertainties in terms of physical sources. We provide a geometric description of the distinction between the two methods by using a coupled pair of fiber bundles. The geometrical approach clarifies the relationship between several methods traditionally used to estimate systematic uncertainties in high-energy physics. By describing the profiling as an information-orthogonal horizontal lift and the Schur complement as the induced metric on the parameter-of-interest manifold, physical-source uncertainties arise by mapping observation fluctuations to score covectors, raising them with the inverse total information, and pushing the resulting estimator covariance forward to the parameters of interest. This construction clarifies why nuisance-parameter impacts do not generally coincide with repeated-experiment source variances.}

\keywords{information geometry, profile likelihood, nuisance parameters,
uncertainty decomposition, Fisher information, fiber bundles}

\maketitle

\section{Introduction}
\label{sec:introduction}

Profile-likelihood fits are a standard language for measurements and searches
in high-energy physics.  A single likelihood can combine signal regions,
control regions, calibration data, theoretical variations, and auxiliary
measurements, while nuisance parameters encode the dependence of the model on
quantities that are not themselves the target of inference
\cite{CowanEtAl2011Asymptotic}.  The inverse
curvature of the likelihood then provides the covariance of all fitted
parameters in the regular asymptotic regime.

The decomposition of that covariance into named uncertainty sources is more
subtle than the calculation of the total covariance.  The common practice of
fixing a nuisance parameter, or shifting it by a post-fit standard deviation
and refitting, produces an \textit{impact}.  An impact is useful as a sensitivity
diagnostic, but it does not in general equal the repeated-experiment variance
generated by the corresponding auxiliary measurement.  In particular,
impacts need not add in quadrature to the total variance.  Pinto et al.\
made this distinction explicit and proposed shifts of the primary and
auxiliary observables as a source-consistent decomposition
\cite{Pinto2024Uncertainty}.  Their construction reproduces the familiar
covariance representation in the linear Gaussian limit and remains applicable
through finite refits for non-Gaussian likelihoods.

The purpose of this paper is to supply a geometric picture for these
operations.  The geometric picture makes it explicit that the nuisance
parameter and the global observable that constrains it do not live in the same manifold while, at the same time, elucidating the relation between the two.  A nuisance displacement is a tangent vector in a parameter fiber.  A global-observable displacement is a tangent vector in an observation space.  It first changes the score, hence defines a parameter-space covector; only after the inverse information metric raises that covector does it become a displacement of the fitted parameters. Consequently, the two procedures can be related by a local chain rule without
being the same uncertainty decomposition.

Two bundles organize the discussion.  The \textit{parameter bundle}
\(\pi:\mathcal P\to\mathcal M\) has the parameter-of-interest (POI) manifold
\(\mathcal M\) as base and nuisance manifolds as fibers.  Fisher orthogonality
defines a horizontal distribution on this bundle
\cite{CoxReid1987Orthogonality}.
The tangent to the profile-likelihood section is horizontal, and the
Schur complement is the corresponding quotient information metric
\cite{MurphyVanDerVaart2000Profile}.  The
\textit{observation--parameter bundle}
\(q:\mathcal Y\times\mathcal P\to\mathcal Y\) places a copy of the full
parameter manifold above each possible data set.  The fit is a local section,
and its differential propagates statistical and systematic source
covariances.

This paper shows five different geometric constructions:
\begin{enumerate}
\item Profiling is characterized as the minimum-information horizontal lift and its induced metric is derived.
\item Nuisance shifts are described by a conditional-refit.
\item Shifts of global observables are expressed as the sequence
\[
T_a\mathcal A
\longrightarrow T_{\widehat z}^{*}\mathcal P
\longrightarrow T_{\widehat z}\mathcal P
\longrightarrow T_{\widehat\theta}\mathcal M .
\]
\item The global-observable response is factored through the nuisance
displacement induced by the same shift. 
\item The usual post-fit nuisance eigenmodes are generalized to an arbitrary auxiliary metric, exposing a mode-by-mode scale factor between the two impact conventions.
\end{enumerate}

All identities shown in this paper are local.  Global information geometry is
defined by the expected Fisher metric~\cite{Rao1945Information,AmariNagaoka2000Methods,Amari2016InformationGeometry}. Some comments about finite displacements and curvature are made at the end of the article.

\section{Statistical model and the coupled pair of bundles}
\label{sec:setup}

Let the regular parameter manifold be \(\mathcal P\), with adapted local
coordinates
\begin{equation}
 z=(\theta,\eta)=(\theta,\alpha,\gamma).
 \label{eq:parameter-split}
\end{equation}
The coordinates \(\theta=(\theta^1,\ldots,\theta^p)\) are POIs.
The constrained nuisance coordinates \(\alpha=(\alpha^1,\ldots,\alpha^r)\)
have auxiliary observations (usually associated with external calibrations), whereas \(\gamma\) denotes nuisance parameters without such observations (usually associated with background normalizations).  The complete observation is
\begin{equation}
 y=(m,a)\in\mathcal Y=\mathcal D\times\mathcal A ,
 \label{eq:observation-split}
\end{equation}
where \(m\) denotes primary data and \(a\) denotes global observables or
auxiliary measurements.  Systematic uncertainty models in high-energy physics are usually constructed so that \(\mathcal Y\) may be decomposed into independent physical source manifolds \(\mathcal Y_s\).

For negative log-likelihood \(\ell(z\given y)=-\log L(z\given y)\), the maximum-likelihood estimator \(\widehat z(y)\) is locally defined by
\begin{equation}
 d_z\ell(\widehat z(y)\given y)=0.
 \label{eq:score-zero}
\end{equation}
The expected Fisher information is the covariant tensor
\begin{equation}
 g_{ij}(z)
 =
 \E_z\!\left[
 \partial_i\log p(Y\given z)\,
 \partial_j\log p(Y\given z)
 \right].
 \label{eq:fisher}
\end{equation}
When it is positive definite, \(g\) makes the statistical model a
Riemannian manifold.  The observed information at the fit is
\begin{equation}
 G_{ij}
 =
 \left.
 \frac{\partial^2\ell}{\partial z^i\partial z^j}
 \right|_{\widehat z,y},
 \qquad
 V=G^{-1}.
 \label{eq:observed-information}
\end{equation}
We use \(G\) for local fit identities and \(g\) for intrinsic statements.

We assume throughout the main text that the relevant information blocks are
nonsingular, the optimum is interior and unique on the branch being studied,
and all maps are sufficiently smooth.  Boundaries, discrete nuisances,
multimodality, and singular models are discussed in
Sect.~\ref{sec:beyond-local}.

\subsection{The parameter and the observable bundles}

The first projection is
\begin{equation}
 \pi:\mathcal P\longrightarrow\mathcal M,
 \qquad
 \pi(\theta,\eta)=\theta .
 \label{eq:parameter-bundle}
\end{equation}
Its fiber \(\mathcal F_\theta=\pi^{-1}(\theta)\) contains all nuisance
configurations compatible with a fixed value of the POIs.  Locally this is
the product \(\mathcal M\times\mathcal F\). The vertical subspace at \(z\) is
\begin{equation}
 \Vspace_z\mathcal P
 =
 \ker d\pi_z.
 \label{eq:vertical-space}
\end{equation}
In Sect.~\ref{sec:profiling}, \(G\)-orthogonality will provide an Ehresmann connection and an associated horizontal subspace \(\Hspace_z\mathcal P\).

The second bundle is
\begin{equation}
 q:\mathcal E=\mathcal Y\times\mathcal P\longrightarrow\mathcal Y,
 \qquad q(y,z)=y .
 \label{eq:observation-bundle}
\end{equation}
Above every possible observation \(y\) sits a copy of the complete parameter
manifold.  Under the regularity assumptions, the implicit-function theorem
turns Eq.~\eqref{eq:score-zero} into the local estimator section
\begin{equation}
 \widehat s:\mathcal Y\longrightarrow\mathcal E,
 \qquad
 \widehat s(y)=(y,\widehat z(y)),
 \qquad
 q\circ\widehat s=\id_{\mathcal Y}.
 \label{eq:estimator-section}
\end{equation}
Its POI projection is
\(\widehat\theta=\pi\circ\pr_{\mathcal P}\circ\widehat s\).
Equivalently, the two projections form the bundle tower
\begin{equation}
 \mathcal Y\times\mathcal P
 \xrightarrow{\ \id_{\mathcal Y}\times\pi\ }
 \mathcal Y\times\mathcal M
 \xrightarrow{\ \pr_{\mathcal Y}\ }
 \mathcal Y .
 \label{eq:bundle-tower}
\end{equation}
Figure~\ref{fig:double-bundle} displays the structure.

\begin{figure*}[t]
 \centering
 \includegraphics[width=0.92\textwidth]{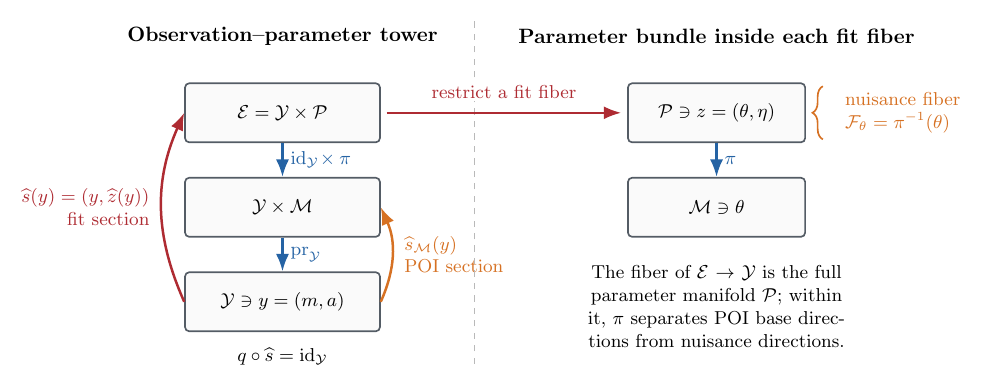}
 \caption{The coupled two-bundle architecture.  The observation--parameter
 bundle places the full parameter manifold above every possible observation.
 Within each such fit fiber, the parameter projection \(\pi\) separates the
 POI base from the nuisance fiber.  The maximum-likelihood fit and its POI
 projection are local sections over observation space.}
 \label{fig:double-bundle}
\end{figure*}

In the case of Gaussian auxiliary likelihood:
\begin{equation}
 \begin{aligned}
  \ell_{\aux}(\alpha\given a)
  &=\frac12(\alpha-a)^{\trans}R(\alpha-a),\\
  \Cov(a)&=R^{-1}.
 \end{aligned}
 \label{eq:gaussian-constraint}
\end{equation}
The nuisance parameter \(\alpha\) is a coordinate in a parameter fiber.
The observed center \(a\) is a coordinate on \(\mathcal A\).  For each fixed
\(a\), the relation \(\alpha=a\) selects an anchor section of the constrained
nuisance space.  Moving \(a\) moves that anchor; it does not directly prescribe
the fitted value of \(\alpha\).

\section{Profiling as horizontal geometry}
\label{sec:profiling}

If we write the local information in POI--nuisance blocks,
\begin{equation}
 G=
 \begin{pmatrix}
  G_{\theta\theta} & G_{\theta\eta}\\
  G_{\eta\theta} & G_{\eta\eta}
 \end{pmatrix}.
 \label{eq:G-blocks}
\end{equation}
The \(G\)-orthogonal complement of
\(\Vspace_z\mathcal P\) defines a horizontal subspace whenever
\(G_{\eta\eta}\) is nonsingular.  A base tangent \(v\in T_\theta\mathcal M\)
has horizontal lift
\begin{equation}
 \Hor_z(v)
 =
 \begin{pmatrix}
  v\\[1mm]
  -G_{\eta\eta}^{-1}G_{\eta\theta}v
 \end{pmatrix}.
 \label{eq:horizontal-lift}
\end{equation}
In adapted coordinates the associated connection one-form may be written
\begin{equation}
 \omega
 =
 \dd\eta+
 G_{\eta\eta}^{-1}G_{\eta\theta}\,\dd\theta ,
 \qquad
 \Hspace_z\mathcal P=\ker\omega_z.
 \label{eq:connection-form}
\end{equation}

\begin{proposition}[Orthogonal profiling and the Schur-complement metric]
\label{prop:profiling}
Let \(\widehat{\widehat\eta}(\theta)\) be the conditional minimizer of
\(\ell(\theta,\eta\given y)\) for fixed \(\theta\).  At a regular point of the
profile section,
\begin{equation}
 D_\theta\widehat{\widehat\eta}
 =
 -G_{\eta\eta}^{-1}G_{\eta\theta}.
 \label{eq:profile-derivative}
\end{equation}
Hence the profile section is tangent to the information-orthogonal horizontal
distribution.  The induced information on POI tangents is
\begin{equation}
 G_{\prof}
 =
 G_{\theta\theta}
 -
 G_{\theta\eta}G_{\eta\eta}^{-1}G_{\eta\theta}.
 \label{eq:profile-schur}
\end{equation}
\end{proposition}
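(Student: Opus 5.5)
The plan is to differentiate the conditional stationarity condition and then read off both claims from the resulting linear relation.

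\emph{Step 1: the derivative of the profile section.} By definition, \(\widehat{\widehat\eta}(\theta)\) satisfies the nuisance block of the score equation identically in \(\theta\):
\[
\partial_\eta\ell\bigl(\theta,\widehat{\widehat\eta}(\theta)\given y\bigr)=0 .
\]
Because the minimizer is interior and \(G_{\eta\eta}\) is nonsingular at the regular point, the implicit-function theorem makes \(\widehat{\widehat\eta}\) a smooth local map. Differentiating the identity in \(\theta\) by the chain rule gives
\[
G_{\eta\theta}+G_{\eta\eta}\,D_\theta\widehat{\widehat\eta}=0 .
\]
Here the Hessian blocks are evaluated along the profile section. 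At the global fit \(\widehat z\) they coincide with the blocks of Eq.~\eqref{eq:G-blocks}; elsewhere on the section they are the Hessian there, and all statements are pointwise. Solving gives Eq.~\eqref{eq:profile-derivative}.

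\emph{Step 2: tangency to the horizontal distribution.} The tangent to the profile section along \(v\in T_\theta\mathcal M\) is \(\bigl(v,\,D_\theta\widehat{\widehat\eta}\,v\bigr)\). By Eq.~\eqref{eq:profile-derivative} this is exactly \(\Hor_z(v)\) from Eq.~\eqref{eq:horizontal-lift}. Equivalently, it is annihilated by \(\omega\) in Eq.~\eqref{eq:connection-form}.

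To confirm that this really is the \(G\)-orthogonal complement of \(\Vspace_z\mathcal P\), pair the lift with a vertical vector \((0,w)\):
\[
w^{\trans}\bigl(G_{\eta\theta}v-G_{\eta\eta}G_{\eta\eta}^{-1}G_{\eta\theta}v\bigr)=0 .
\]
Counting dimensions, the horizontal lift is the unique complement with this property.

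\emph{Step 3: the induced metric.} There are two routes, and I would do both for robustness.

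\textbf{(a) Pullback.} Compute \(G\bigl(\Hor(v),\Hor(v)\bigr)\) by expanding the quadratic form. The cross terms combine to \(-2v^{\trans}G_{\theta\eta}G_{\eta\eta}^{-1}G_{\eta\theta}v\), and the \(\eta\eta\) term contributes \(+v^{\trans}G_{\theta\eta}G_{\eta\eta}^{-1}G_{\eta\theta}v\). The sum is \(v^{\trans}G_{\prof}v\).

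\textbf{(b) Hessian of the profile likelihood.} Let \(\ell_{\prof}(\theta)=\ell(\theta,\widehat{\widehat\eta}(\theta))\). Its first derivative is \(\partial_\theta\ell\), because the \(\eta\)-score vanishes along the section (an envelope argument). Differentiating once more, and using Step~1, gives
\[
G_{\theta\theta}+G_{\theta\eta}D_\theta\widehat{\widehat\eta}=G_{\prof} .
\]
This identifies Eq.~\eqref{eq:profile-schur} both as the quotient metric and as the curvature of the profile likelihood.

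I would also note the minimum-information characterization promised in the introduction. Among all lifts \((v,u)\), the quadratic form \(G\bigl((v,u),(v,u)\bigr)\) is a convex function of \(u\) with \(G_{\eta\eta}\succ0\). Its minimizer is \(u=-G_{\eta\eta}^{-1}G_{\eta\theta}v\), so the horizontal lift attains the Schur value.

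\emph{Main obstacle.} The computation itself is routine. The delicate point is bookkeeping of where \(G\) is evaluated: the proposition is stated with the observed information at the fit, while the profile section ranges over \(\theta\).

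I would handle this by stating the identities pointwise along the section with the local Hessian, and specializing to \(\widehat z\) when needed. I would also remark that replacing \(G\) by the expected metric \(g\) gives the intrinsic version of the horizontal distribution. In that version, tangency of the profile section holds only asymptotically, since the profile derivative uses the observed rather than the expected information.
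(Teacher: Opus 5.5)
Your proposal is correct and follows the paper's proof. You differentiate the nuisance stationarity condition with the implicit-function theorem to get \(D_\theta\widehat{\widehat\eta}=-G_{\eta\eta}^{-1}G_{\eta\theta}\), then substitute the horizontal lift into \(G(\Hor(v),\Hor(v))\) to obtain the Schur complement. Your orthogonality check against vertical vectors, the profile-likelihood Hessian route in Step 3(b), and the minimization over lifts (which the paper states separately as Eq.~\eqref{eq:quotient-variational}) are sound supplements, not a different argument.
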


\begin{proof}
Differentiate the vertical stationarity condition
\(d_\eta\ell(\theta,\widehat{\widehat\eta}(\theta)\given y)=0\).
The implicit-function theorem gives
\[
 G_{\eta\theta}\,\dd\theta+
 G_{\eta\eta}\,\dd\widehat{\widehat\eta}=0,
\]
which proves Eq.~\eqref{eq:profile-derivative}.  Substitution of the
horizontal lift into its full squared length gives
\[
 G\bigl(\Hor(v),\Hor(v)\bigr)
 =
 v^{\trans}
 \left(
 G_{\theta\theta}
 -
 G_{\theta\eta}G_{\eta\eta}^{-1}G_{\eta\theta}
 \right)v .
\]
\end{proof}

\begin{figure}[t]
 \centering
 \includegraphics[width=0.6\linewidth]{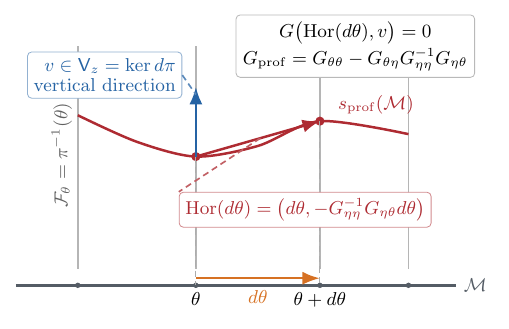}
 \caption{Profiling as an information-orthogonal horizontal lift.  Moving on
 the POI base induces the nuisance displacement that minimizes the local
 information length.  Restricting the full metric to these horizontal
 tangents gives the Schur complement \(G_{\prof}\).}
 \label{fig:profile-connection}
\end{figure}

The same statement has the variational form
\begin{equation}
 v^{\trans}G_{\prof}v
 =
 \min_{w_\eta}
 \begin{pmatrix}v\\w_\eta\end{pmatrix}^{\trans}
 G
 \begin{pmatrix}v\\w_\eta\end{pmatrix}.
 \label{eq:quotient-variational}
\end{equation}
Thus \(G_{\prof}\) is the quotient metric, \textit{i.e.}, the least
information cost of realizing a prescribed POI displacement while allowing
the nuisance fiber to respond.  It is also the efficient information familiar
from profile-likelihood theory~\cite{MurphyVanDerVaart2000Profile}. Block inversion gives
\begin{equation}
 V_{\theta\theta}=G_{\prof}^{-1}.
 \label{eq:profile-covariance}
\end{equation}
This equation is the metric--covariance duality behind the usual profiled
uncertainty.  The metric is covariant and is pulled back along
the horizontal profile section; its inverse is a contravariant covariance.

\section{The estimator section and physical source propagation}
\label{sec:observation-geometry}

The graph of the estimator section is the zero set of the parameter score.
Its tangent is obtained by differentiating Eq.~\eqref{eq:score-zero}.  Define
the mixed score-response map
\begin{equation}
 \mathsf K_y
 \equiv
 -D_y(d_z\ell):
 T_y\mathcal Y\longrightarrow T_{\widehat z}^{*}\mathcal P .
 \label{eq:K-definition}
\end{equation}

\begin{proposition}[Estimator influence on the observation bundle]
\label{prop:influence}
At a regular maximum,
\begin{equation}
 D_y\widehat z
 =
 V^\sharp\mathsf K_y,
 \qquad
 D_y\widehat\theta
 =
 d\pi\,V^\sharp\mathsf K_y ,
 \label{eq:estimator-influence}
\end{equation}
where \(V^\sharp:T_{\widehat z}^{*}\mathcal P\to
T_{\widehat z}\mathcal P\) is the index-raising map represented by
\(V=G^{-1}\).
\end{proposition}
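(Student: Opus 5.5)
The plan is to differentiate the stationarity condition \eqref{eq:score-zero} along the estimator section \eqref{eq:estimator-section}, exactly as in the proof of Proposition~\ref{prop:profiling}, but now letting the observation \(y\) vary instead of the POI coordinate \(\theta\). Consider the map \(F(y,z)=d_z\ell(z\given y)\), which sends \((y,z)\) to a covector in \(T^*_z\mathcal P\). The graph of \(\widehat s\) is locally the zero set \(F=0\). Under the standing regularity assumptions, \(D_zF=G\) is nonsingular at the fit. The implicit-function theorem therefore guarantees that \(\widehat z(y)\) is a smooth local section. Its differential is obtained by applying the chain rule to the identity \(F(y,\widehat z(y))=0\).

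Concretely, I will differentiate \(d_z\ell(\widehat z(y)\given y)=0\) in the direction of an arbitrary \(\delta y\in T_y\mathcal Y\). This gives
\[
 G\,D_y\widehat z\,\delta y+D_y(d_z\ell)\,\delta y=0 .
\]
Here \(G\) is the observed information \eqref{eq:observed-information}, evaluated at \((\widehat z,y)\). By the definition \eqref{eq:K-definition}, the second term equals \(-\mathsf K_y\delta y\). Rearranging yields \(G\,D_y\widehat z=\mathsf K_y\). Since \(G\) is invertible at a regular maximum, I can multiply by \(V=G^{-1}\) to get \(D_y\widehat z=V\mathsf K_y\).

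The remaining step is interpretive. I must confirm that multiplication by the matrix \(V\) is the coordinate expression of the index-raising map \(V^\sharp:T^*_{\widehat z}\mathcal P\to T_{\widehat z}\mathcal P\). This holds because \(G\) is a symmetric covariant 2-tensor, namely the Hessian at a critical point, where it is coordinate-independent. The map \(G^\flat\colon w\mapsto G(w,\cdot)\) sends tangent vectors to covectors, and the identity above reads \(G^\flat(D_y\widehat z\,\delta y)=\mathsf K_y\delta y\). Its inverse \((G^\flat)^{-1}=V^\sharp\) gives the first formula in \eqref{eq:estimator-influence} intrinsically.

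For the POI statement, I write \(\widehat\theta=\pi\circ\pr_{\mathcal P}\circ\widehat s\) and apply the chain rule once more. Since \(\pr_{\mathcal P}\circ\widehat s=\widehat z\), this gives \(D_y\widehat\theta=d\pi\,D_y\widehat z=d\pi\,V^\sharp\mathsf K_y\).

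The only delicate point, and the step I expect to need the most care, is the claim that the Hessian is tensorial, so that \(V^\sharp\) is well defined independently of coordinates. At a general point the second derivative of \(\ell\) is not a tensor. At \(\widehat z(y)\), however, the first derivative \(d_z\ell\) vanishes, so the Christoffel-type correction term drops out and \(G\) is a genuine symmetric bilinear form on \(T_{\widehat z}\mathcal P\).

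By the same reasoning, \(D_y(d_z\ell)\) at the zero set is a well-defined linear map \(T_y\mathcal Y\to T^*_{\widehat z}\mathcal P\). This is because differentiating in \(y\) with \(z\) held fixed uses the product structure \(\mathcal E=\mathcal Y\times\mathcal P\) of the observation bundle. With these two checks in place, the computation is coordinate-free and the proposition follows.
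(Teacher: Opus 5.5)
Your proof is correct and is the paper's argument: differentiate the score equation \(d_z\ell(\widehat z(y)\given y)=0\) to get \(G\,D_y\widehat z+D_y(d_z\ell)=0\), multiply by \(V=G^{-1}\) and use the definition of \(\mathsf K_y\), then compose with \(d\pi\) for the POI statement. Your extra remark is a correct refinement that the paper leaves implicit: the Hessian is a genuine tensor at the critical point, so \(V^\sharp\) is intrinsic.
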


\begin{proof}
Differentiation of the score equation gives
\[
 G\,D_y\widehat z+D_y(d_z\ell)=0.
\]
Multiplication by \(V\) and the definition of \(\mathsf K_y\) yield
Eq.~\eqref{eq:estimator-influence}.
\end{proof}

Equation~\eqref{eq:estimator-influence} gives a precise geometric sequence:
\begin{equation}
 T_y\mathcal Y
 \xrightarrow{\ \mathsf K_y\ }
 T_{\widehat z}^{*}\mathcal P
 \xrightarrow{\ V^\sharp\ }
 T_{\widehat z}\mathcal P
 \xrightarrow{\ d\pi\ }
 T_{\widehat\theta}\mathcal M .
 \label{eq:source-sequence}
\end{equation}
An observation displacement changes the score covector.  The inverse
information converts that covector into a parameter displacement, after which
the parameter projection selects the POI component.

If a source block \(y_s\) has covariance tensor \(\Sigma_s\), its local POI
covariance contribution is the pushforward
\begin{equation}
 \Sigma_\theta^{[s]}
 =
 J_s\Sigma_sJ_s^{\trans},
 \qquad
 J_s=D_{y_s}\widehat\theta .
 \label{eq:source-pushforward}
\end{equation}

The expected information gives an especially compact expression.  Suppose
the likelihood is a product over conditionally independent physical sources
\(s\).  Their score covariances add,
\begin{equation}
 g=\sum_s g_s,
 \qquad
 g_s=\Cov_z\!\left[d_z\log p(Y_s\given z)\right].
 \label{eq:fisher-source-sum}
\end{equation}
The asymptotic covariance carried from source \(s\) to the POI manifold is
\begin{equation}
 \Sigma_{\theta}^{[s]}
 =
 d\pi\,g^{-1}g_sg^{-1}d\pi^{\trans}.
 \label{eq:fisher-source-decomposition}
\end{equation}
Each source supplies a covariant score metric \(g_s\).  The inverse
\emph{total} metric raises both indices and turns it into an estimator
covariance; \(d\pi\) then maps that covariance to the POI base.  Summing over
all sources recovers
\begin{equation}
 \sum_s\Sigma_\theta^{[s]}
 =
 d\pi\,g^{-1}gg^{-1}d\pi^{\trans}
 =
 (g^{-1})_{\theta\theta}.
 \label{eq:fisher-source-closure}
\end{equation}
This identity makes the conceptual difference between parameters and sources
particularly sharp.  Nuisance coordinates label directions in the parameter
manifold; uncertainty components are defined by stochastic score metrics on
the observation manifold.

\subsection{Global-observable shifts}

For the Gaussian auxiliary likelihood in Eq.~\eqref{eq:gaussian-constraint},
a displacement \(da\in T_a\mathcal A\) produces
\begin{equation}
 \mathsf K_a\,da
 =
 \begin{pmatrix}
  0\\ R\,da\\ 0
 \end{pmatrix}
 \in T_{\widehat z}^{*}\mathcal P ,
 \label{eq:GO-score-covector}
\end{equation}
where the blocks correspond to \((\theta,\alpha,\gamma)\).  The injected
covector is supported on constrained nuisance directions.  Equations
\eqref{eq:estimator-influence} and \eqref{eq:GO-score-covector} give
\begin{align}
 d\widehat\theta_{\GO}
 &=V_{\theta\alpha}R\,da,
 \label{eq:GO-theta-response}\\
 d\widehat\alpha_{\GO}
 &=V_{\alpha\alpha}R\,da,
 \label{eq:GO-alpha-response}\\
 d\widehat\gamma_{\GO}
 &=V_{\gamma\alpha}R\,da.
 \label{eq:GO-gamma-response}
\end{align}
Figure~\ref{fig:global-observable} shows both the anchor motion and its
tangent--cotangent interpretation.

\begin{figure*}[t]
 \centering
 \includegraphics[width=0.96\textwidth]{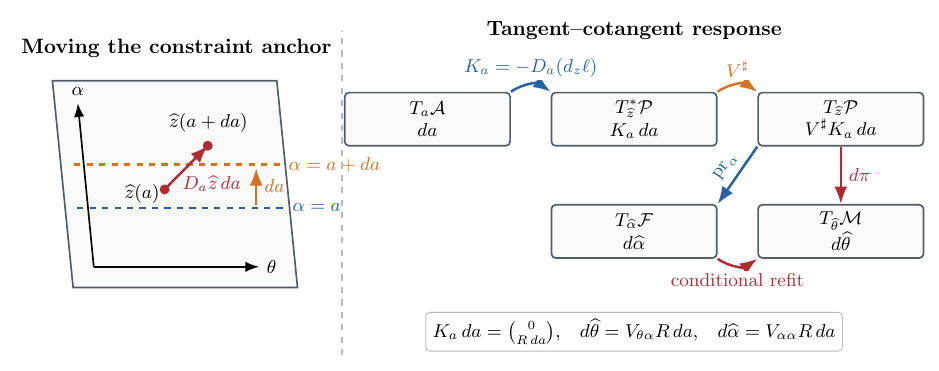}
 \caption{Shifting a global observable translates the anchor of an
 auxiliary constraint. It does not directly shift the fitted nuisance
 coordinate.  The mixed score derivative maps the observation tangent
 into a nuisance-supported covector.  The inverse information raises this
 covector into a full parameter displacement, whose POI and nuisance
 components can then be projected.}
 \label{fig:global-observable}
\end{figure*}

Since \(\Cov(a)=R^{-1}\), the auxiliary-source covariance is
\begin{equation}
 \Sigma_\theta^{[\GO]}
 =
 V_{\theta\alpha}R\,V_{\alpha\theta}.
 \label{eq:GO-covariance}
\end{equation}
For standardized independent constraints, \(R=I\), and a single source \(a_r\)
contributes the rank-one tensor
\begin{equation}
 \Sigma_{\theta}^{[a_r]}
 =
 v_rv_r^{\trans},
 \qquad
 (v_r)_p=V_{\theta^p\alpha^r}.
 \label{eq:GO-rank-one}
\end{equation}
This is the covariance form of the shifted-global-observable construction of
Ref.~\cite{Pinto2024Uncertainty}.  The sign of \(v_r\) depends on how the
nuisance response is parameterized, but it is irrelevant since the outer product does not. For a non-Gaussian auxiliary term, Eq.~\eqref{eq:source-sequence} remains the local formula with the actual mixed derivative \(\mathsf K_a\). 

\section{Nuisance shifts and impacts as conditional geometry}
\label{sec:np-shifts}

Choose a nuisance block \(\alpha\) to be fixed, and collect every parameter that remains free into \(u\), including all POIs and any other nuisance parameters.  Partition the information and its inverse as
\begin{equation}
 G=
 \begin{pmatrix}
  G_{uu}&G_{u\alpha}\\
  G_{\alpha u}&G_{\alpha\alpha}
 \end{pmatrix},
 \qquad
 V=G^{-1}.
 \label{eq:u-alpha-partition}
\end{equation}
For fixed \(\alpha\), let \(u^\star(\alpha)\) denote the conditional optimum.

\begin{proposition}[Conditional-refit response]
\label{prop:conditional-refit}
The differential of the conditional optimum is
\begin{equation}
 D_\alpha u^\star
 =
 -G_{uu}^{-1}G_{u\alpha}
 =
 V_{u\alpha}V_{\alpha\alpha}^{-1}.
 \label{eq:conditional-response}
\end{equation}
The corresponding POI response is
\begin{equation}
 J_{\theta\leftarrow\alpha}
 =
 V_{\theta\alpha}V_{\alpha\alpha}^{-1}.
 \label{eq:NP-response}
\end{equation}
\end{proposition}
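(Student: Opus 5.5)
The plan has two parts: an implicit-function computation for the conditional optimum, and a block-inversion identity that rewrites the answer in terms of \(V\). The first step repeats the argument of Proposition~\ref{prop:profiling} with \(u\) in place of \(\eta\) and \(\alpha\) in place of \(\theta\). Write the conditional stationarity condition \(d_u\ell(u^\star(\alpha),\alpha\given y)=0\), valid for all \(\alpha\) near the fitted value, and differentiate it in \(\alpha\). This gives
\[
 G_{uu}\,D_\alpha u^\star+G_{u\alpha}=0 .
\]
Since \(G_{uu}\) is nonsingular by the standing regularity assumptions (it is a principal block of a positive-definite \(G\) at a regular maximum), this yields \(D_\alpha u^\star=-G_{uu}^{-1}G_{u\alpha}\). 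Geometrically, this is again a horizontal lift, now for the projection onto the fixed block \(\alpha\).

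The second step converts \(-G_{uu}^{-1}G_{u\alpha}\) into \(V_{u\alpha}V_{\alpha\alpha}^{-1}\). Read off the off-diagonal block of the identity \(GV=I\) in the partition \eqref{eq:u-alpha-partition}:
\[
 G_{uu}V_{u\alpha}+G_{u\alpha}V_{\alpha\alpha}=0 .
\]
Multiply on the left by \(G_{uu}^{-1}\) and on the right by \(V_{\alpha\alpha}^{-1}\) to get \(-G_{uu}^{-1}G_{u\alpha}=V_{u\alpha}V_{\alpha\alpha}^{-1}\). This requires \(V_{\alpha\alpha}\) to be invertible. That holds because, by the block-inversion formula already used for Eq.~\eqref{eq:profile-covariance}, \(V_{\alpha\alpha}^{-1}\) equals the Schur complement \(G_{\alpha\alpha}-G_{\alpha u}G_{uu}^{-1}G_{u\alpha}\), which is positive definite when \(G\) is.

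The final step restricts to the POI rows. Since \(\theta\) is a sub-block of \(u\), the POI response \(J_{\theta\leftarrow\alpha}\) is obtained by applying the projection \(d\pi\) (restricted to the \(\theta\)-rows of \(u\)) to \(D_\alpha u^\star\). Taking the \(\theta\)-rows of \(V_{u\alpha}\) gives \(V_{\theta\alpha}\), hence \(J_{\theta\leftarrow\alpha}=V_{\theta\alpha}V_{\alpha\alpha}^{-1}\).

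No step is a real obstacle. The only point needing care is the justification that both \(G_{uu}\) and \(V_{\alpha\alpha}\) are invertible, and I would handle it through positive definiteness of \(G\) at a regular interior minimum, as stated in the paper's standing assumptions.
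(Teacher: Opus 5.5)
Your proposal is correct and follows the paper's own proof. You differentiate the conditional stationarity condition \(d_u\ell(u^\star(\alpha),\alpha)=0\) to obtain \(-G_{uu}^{-1}G_{u\alpha}\), convert this via the block-inverse identity (which you make explicit through the off-diagonal block of \(GV=I\), together with a justification that \(G_{uu}\) and \(V_{\alpha\alpha}\) are invertible, which the paper leaves implicit), and then read off the \(\theta\)-rows.
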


\begin{proof}
Differentiate \(d_u\ell(u^\star(\alpha),\alpha)=0\) to obtain the first
equality.  The second follows from the standard block-inverse identity.
\end{proof}
This result was used for quick evaluation of nuisance impacts using auto-differentiation in~\cite{ATLAS:2025clx}. The geometry is worth stating carefully.  A nuisance displacement is vertical
for \(\pi:\mathcal P\to\mathcal M\), and therefore
\(d\pi(0,d\alpha)=0\). The conditional-refit map embeds it as the tangent
\begin{equation}
 d\alpha
 \longmapsto
 \bigl(D_\alpha u^\star\,d\alpha,d\alpha\bigr)
 \in T_z\mathcal P ,
 \label{eq:conditional-embedding}
\end{equation}
whose POI component is then selected.  Equivalently,
Eq.~\eqref{eq:conditional-response} is the horizontal lift for the
role-reversed local fibration in which \(\alpha\) is regarded as the base and
\(u\) as the fiber.

A traditional post-fit nuisance shift assigns
\begin{equation}
 d\alpha=V_{\alpha\alpha}^{1/2}\xi,
 \qquad
 \xi^{\trans}\xi=1,
 \label{eq:one-postfit-sigma}
\end{equation}
and reports
\(J_{\theta\leftarrow\alpha}d\alpha\). A particularly clear choice of basis is discussed in Sec.~\ref{sec:eigenmodes}. This is a sensitivity to a chosen
parameter-space displacement.

\subsection{Profiling penalty by freezing/removing nuisance parameters}

A similar method used in high-energy physics to estimate the impact of a set \(\alpha\) of nuisance parameters is to freeze them and refit the model. The covariance of \(u\) conditional on fixed \(\alpha\) is
\(G_{uu}^{-1}\).  The same block identity used above yields:
\begin{equation}
 V_{uu}
 =
 G_{uu}^{-1}
 +
 V_{u\alpha}V_{\alpha\alpha}^{-1}V_{\alpha u}.
 \label{eq:conditional-marginal-covariance}
\end{equation}
Consequently the POI covariance increase caused by allowing \(\alpha\) to
float is
\begin{equation}
 \Delta V_\theta^{[\prof\,\alpha]}
 =
 V_{\theta\alpha}V_{\alpha\alpha}^{-1}V_{\alpha\theta}
 =
 J_{\theta\leftarrow\alpha}
 V_{\alpha\alpha}
 J_{\theta\leftarrow\alpha}^{\trans}.
 \label{eq:profiling-penalty}
\end{equation}
This is a genuine pushforward of the \textit{post-fit nuisance covariance}
through the conditional response.  It quantifies how much local
POI covariance is lost by profiling this fitted parameter subspace rather
than fixing it.

The related freeze/remove impact compares two inverse horizontal metrics:
the full profile geometry and the geometry of a restricted model in which a
nuisance direction is absent or fixed.  It therefore compares two
experiments or two submodels.  It is not an additive decomposition of the
source covariance of one complete repeated experiment~\cite{Pinto2024Uncertainty}.

There is nevertheless an exact local relation to the one-post-fit-\(\sigma\)
shifts.  For any complete orthonormal basis \(\{e_i\}\) of the whitened
nuisance tangent space, define
\begin{equation}
 \begin{aligned}
  \Delta\theta_i^{1\sigma}
  &=J_{\theta\leftarrow\alpha}
    V_{\alpha\alpha}^{1/2}e_i,\\
  \sum_i\Delta\theta_i^{1\sigma}
  (\Delta\theta_i^{1\sigma})^{\trans}
  &=J_{\theta\leftarrow\alpha}V_{\alpha\alpha}
    J_{\theta\leftarrow\alpha}^{\trans}
   =\Delta V_\theta^{[\prof\,\alpha]}.
 \end{aligned}
 \label{eq:profiling-mode-impacts}
\end{equation}
Thus, for one POI, the freeze/float variance penalty is the quadrature sum of
a complete set of such mode impacts; for a single scalar nuisance it is the
square of its one-post-fit-\(\sigma\) impact.  This is a local quadratic
identity for a complete whitened basis.  One-at-a-time shifts of correlated
named nuisance coordinates need not form that basis, and the individual
terms are not automatically physical source components.

\section{The local relationship between nuisance and global-observable shifts}
\label{sec:local-relation}

The two response maps meet at the nuisance displacement that a
global-observable fluctuation actually induces.

\begin{proposition}[Factorization of global-observable and nuisance shifts]
\label{prop:factorization}
For the nuisance block constrained by Eq.~\eqref{eq:gaussian-constraint},
\begin{equation}
 J_{\theta\leftarrow a}
 =
 J_{\theta\leftarrow\alpha}
 J_{\alpha\leftarrow a}
 =
 \left(V_{\theta\alpha}V_{\alpha\alpha}^{-1}\right)
 \left(V_{\alpha\alpha}R\right)
 =
 V_{\theta\alpha}R .
 \label{eq:factorization}
\end{equation}
Thus the NP-shift and shifted-global-observable procedures give the same
first-order POI displacement if the NP shift is chosen to be
\begin{equation}
 d\alpha=d\widehat\alpha_{\GO}
 =V_{\alpha\alpha}R\,da .
 \label{eq:matching-NP-shift}
\end{equation}
\end{proposition}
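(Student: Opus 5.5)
The plan is to assemble the factorization entirely from results already stated, and then to justify why the composite equals the global-observable response.

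First, I will read off the two factors separately. The POI response to a nuisance displacement, \(J_{\theta\leftarrow\alpha}=V_{\theta\alpha}V_{\alpha\alpha}^{-1}\), is Eq.~\eqref{eq:NP-response} of Proposition~\ref{prop:conditional-refit}. The nuisance response to a global-observable displacement, \(J_{\alpha\leftarrow a}=V_{\alpha\alpha}R\), is Eq.~\eqref{eq:GO-alpha-response}. That equation is obtained from Proposition~\ref{prop:influence} with the covector~\eqref{eq:GO-score-covector}, taking the \(\alpha\)-row of \(V\). Multiplying the two factors cancels \(V_{\alpha\alpha}^{-1}V_{\alpha\alpha}\) and leaves \(V_{\theta\alpha}R\). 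This agrees with Eq.~\eqref{eq:GO-theta-response}, i.e.\ with \(J_{\theta\leftarrow a}\) computed directly from the sequence~\eqref{eq:source-sequence}. This establishes the chain of equalities in Eq.~\eqref{eq:factorization}.

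The key conceptual step, which I expect to be the main obstacle, is to explain why this algebraic product is a genuine chain rule rather than a coincidence. Here \(J_{\theta\leftarrow\alpha}\) is the derivative of a conditional refit in which \(\alpha\) is frozen. In the global-observable problem, however, \(\alpha\) is free.

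To resolve this, I will note that \(a\) enters \(\ell\) only through \(\ell_{\aux}(\alpha\given a)\), which does not involve \(u\). Hence the \(u\)-score equation \(d_u\ell(u,\alpha\given a)=0\) is independent of \(a\). Its solution is \(u^\star(\alpha)\), the same conditional optimum as in Proposition~\ref{prop:conditional-refit}. The full fit therefore satisfies \(\widehat u(a)=u^\star(\widehat\alpha(a))\) identically in \(a\).

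Differentiating this identity by the ordinary chain rule gives \(D_a\widehat u=D_\alpha u^\star\,D_a\widehat\alpha\). Its \(\theta\)-rows give the first equality in~\eqref{eq:factorization}, independently of the explicit block formulas. The explicit forms then serve only as a consistency check.

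Finally, Eq.~\eqref{eq:matching-NP-shift} is immediate. Set \(d\alpha=d\widehat\alpha_{\GO}=V_{\alpha\alpha}R\,da\) in the NP-shift response \(J_{\theta\leftarrow\alpha}d\alpha\). By the factorization, this gives \(V_{\theta\alpha}R\,da=d\widehat\theta_{\GO}\) to first order.

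The only assumptions needed are those already in force: \(V_{\alpha\alpha}\) and \(G_{uu}\) are nonsingular, and the branch is regular, so that the implicit-function theorem applies to both the full and the conditional score equations.
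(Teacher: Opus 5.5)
Your proof is correct, and its algebraic core is exactly the paper's proof: insert Eq.~\eqref{eq:GO-alpha-response} into Eq.~\eqref{eq:NP-response} and cancel \(V_{\alpha\alpha}^{-1}V_{\alpha\alpha}\). Your extra chain-rule justification via \(\widehat u(a)=u^\star(\widehat\alpha(a))\) is also sound, since \(a\) does not enter the \(u\)-score; it is the differential form of the finite endpoint identity, Eq.~\eqref{eq:finite-composition}, which the paper gives after the proposition as its nonlinear parent rather than using it in the proof.
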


\begin{proof}
Insert Eq.~\eqref{eq:GO-alpha-response} into
Eq.~\eqref{eq:NP-response}.  The factors of \(V_{\alpha\alpha}\) cancel.
\end{proof}

\begin{figure}[t]
 \centering
 \includegraphics[width=0.7\linewidth]{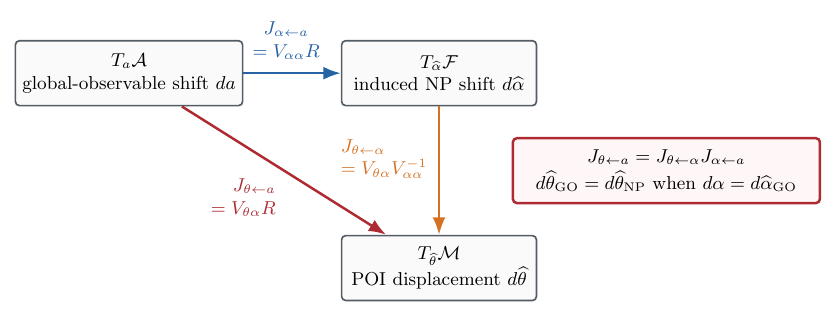}
 \caption{The commuting triangle relating global-observable and nuisance
 shifts.  The direct source-to-POI response equals the conditional nuisance
 response evaluated on the nuisance displacement generated by that same
 source.}
 \label{fig:local-relation}
\end{figure}

In the scalar standardized case,
\begin{equation}
 R=1,\qquad V_{\alpha\alpha}=\sigma_\alpha^2 ,
 \label{eq:scalar-standardized}
\end{equation}
a unit global-observable shift gives
\begin{equation}
 d\widehat\alpha_{\GO}=\sigma_\alpha^2,
 \qquad
 d\widehat\theta_{\GO}=V_{\theta\alpha}.
 \label{eq:scalar-GO}
\end{equation}
A one-post-fit-standard-deviation nuisance shift instead gives
\begin{equation}
 d\widehat\theta_{\NP}^{1\sigma}
 =
 \frac{V_{\theta\alpha}}{\sigma_\alpha}.
 \label{eq:scalar-NP}
\end{equation}
Therefore
\begin{equation}
 d\widehat\theta_{\GO}^{1\sigma(a)}
 =
 \sigma_\alpha\,
 d\widehat\theta_{\NP}^{1\sigma(\alpha)}.
 \label{eq:scalar-scale-relation}
\end{equation}
When the primary data constrain the nuisance strongly,
\(\sigma_\alpha\ll1\), a one-post-fit-\(\sigma\) nuisance shift is much larger
than the nuisance motion generated by a one-\(\sigma\) auxiliary fluctuation.

The distinction is equally transparent at covariance level.  If nuisance
displacements are assigned their full post-fit covariance, then
\begin{equation}
 \Sigma_\theta^{[\NP]}
 =
 V_{\theta\alpha}V_{\alpha\alpha}^{-1}V_{\alpha\theta},
 \label{eq:NP-propagated-covariance}
\end{equation}
which is the profiling penalty in Eq.~\eqref{eq:profiling-penalty}.  By
contrast, global-observable fluctuations induce only
\begin{equation}
 \Cov(d\widehat\alpha_{\GO})
 =
 V_{\alpha\alpha}R\,V_{\alpha\alpha}.
 \label{eq:GO-induced-alpha-covariance}
\end{equation}
Pushing this covariance through the conditional map gives
\begin{align}
 &J_{\theta\leftarrow\alpha}
 \Cov(d\widehat\alpha_{\GO})
 J_{\theta\leftarrow\alpha}^{\trans}
 \nonumber\\
 &\hspace{15mm}
 =
 V_{\theta\alpha}R\,V_{\alpha\theta}
 =
 \Sigma_\theta^{[\GO]}.
 \label{eq:covariance-factorization}
\end{align}
The two covariance constructions become identical only after the nuisance
shift covariance is chosen to be the part generated by auxiliary-source
fluctuations.

A global-observable fluctuation first injects the nuisance
score covector \(\kappa_\alpha=R\,da\).  Since
\(\Cov(da)=R^{-1}\), its covariance is \(\Cov(\kappa_\alpha)=R\).
The inverse information then maps this score kick to the fitted nuisance
displacement
\(d\widehat\alpha_{\GO}=V_{\alpha\alpha}\kappa_\alpha\). Its pushforward is:

\begin{equation}
 \Cov(d\widehat\alpha_{\GO})
 =V_{\alpha\alpha}\Cov(\kappa_\alpha)
  V_{\alpha\alpha}^{\trans}
 =V_{\alpha\alpha}R\,V_{\alpha\alpha}.
 \label{eq:GO-score-kick-pushforward}
\end{equation}
This tensor isolates the nuisance motion generated by the auxiliary source.
The full post-fit covariance \(V_{\alpha\alpha}\), by contrast, contains
nuisance motion induced by all primary and auxiliary sources.  In the scalar
standardized case the susceptibility is
\(d\widehat\alpha/da=\sigma_\alpha^2\), so a unit-variance anchor fluctuation
produces variance \(\sigma_\alpha^4\) in the fitted nuisance coordinate.

Note that fixing all constrained nuisance coordinates defines a conditional
covariance, not in general the repeated-experiment covariance generated by the
primary data.  From Eq.~\eqref{eq:conditional-marginal-covariance}, its POI
block is
\begin{equation}
 V_{\theta\theta}^{[\mathrm{fix}\,\alpha]}
 =V_{\theta\theta}
 -V_{\theta\alpha}V_{\alpha\alpha}^{-1}V_{\alpha\theta}.
 \label{eq:fixed-versus-total-covariance}
\end{equation}
By contrast, when the source closure holds and the constrained coordinates
are anchored by the Gaussian auxiliary block, the primary-data component is
\begin{equation}
 \Sigma_\theta^{[\stat]}
 =V_{\theta\theta}-\Sigma_\theta^{[\GO]}
 =V_{\theta\theta}-V_{\theta\alpha}R V_{\alpha\theta}.
 \label{eq:source-statistical-covariance}
\end{equation}
The two subtractions involve different nuisance covariances.  Freezing
\(\alpha\) removes not only its response to the auxiliary observations, but
also nuisance motion induced by fluctuations of the primary data.  Therefore
the fixed-nuisance result may be reported as an operational ``statistical-only'' uncertainty, but it should not be identified with the source-consistent statistical component in Eq.~\eqref{eq:source-statistical-covariance} except in special cases where the two expressions coincide.

\subsection{An exact endpoint identity for a finite shift}

The local matrix relation has a useful nonlinear parent.  Suppose the
likelihood factorizes as
\begin{equation}
 \ell(u,\alpha\given m,a)
 =
 \ell_0(u,\alpha\given m)
 +
 \ell_{\aux}(\alpha\given a),
 \label{eq:separable-likelihood}
\end{equation}
so that \(a\) does not enter the \(u\)-score.  Define
\(a'=a+\Delta a\), and first perform the ordinary shifted-global-observable
fit with both parameter blocks floating:
\begin{equation}
 (\widehat u_{a'},\widehat\alpha_{a'})
 =\argmin_{u,\alpha}
 \left\{\ell_0(u,\alpha\given m)
 +\ell_{\aux}(\alpha\given a')\right\}.
 \label{eq:finite-all-floating-fit}
\end{equation}
Only after this first fit do we freeze the nuisance block, at the endpoint
\(\widehat\alpha_{a'}\) that the shifted-observable fit itself selected, and
conditionally refit \(u\) to the same primary data:
\begin{equation}
 u^\star(\widehat\alpha_{a'};m)
 =\argmin_u\ell_0(u,\widehat\alpha_{a'}\given m).
 \label{eq:finite-conditional-map}
\end{equation}
The auxiliary term is constant with respect to \(u\), so the \(u\)-stationarity
condition in the all-floating fit is exactly the one imposed by this second,
conditional fit.  On any unique regular solution branch,
\begin{equation}
 \widehat u_{a'}=u^\star(\widehat\alpha_{a'};m).
 \label{eq:finite-composition}
\end{equation}
Thus a finite shifted-global-observable fit can be reproduced by fixing the
nuisance block to the \textit{value induced by that fit} and conditionally
refitting \(u\).  What is only local is the replacement of this nonlinear
composition by the constant matrices in Eq.~\eqref{eq:factorization}, and the
simple scale comparison between conventional one-\(\sigma\) shifts.

\section{Multiple constrained nuisances and invariant eigenmodes}
\label{sec:eigenmodes}

An ordinary diagonalization of \(V_{\alpha\alpha}\) depends on the units and
coordinate normalization chosen for the nuisance parameters.  The auxiliary
precision \(R\) supplies the natural covariant metric with which to define
dimensionless modes.  Whiten the auxiliary geometry and diagonalize
\begin{equation}
 \begin{aligned}
  W_\alpha
  &\equiv R^{1/2}V_{\alpha\alpha}R^{1/2}\\
  &=Q\Lambda Q^{\trans},\\[-1mm]
  \Lambda&=\diag(\lambda_1,\ldots,\lambda_r).
 \end{aligned}
 \label{eq:whitened-diagonalization}
\end{equation}
Define the principal coordinates
\begin{equation}
 \beta=Q^{\trans}R^{1/2}\alpha,
 \qquad
 b=Q^{\trans}R^{1/2}a.
 \label{eq:principal-coordinates}
\end{equation}
Then
\begin{equation}
 V_{\beta\beta}=\Lambda,
 \qquad
 \Cov(b)=I .
 \label{eq:principal-covariances}
\end{equation}
Equivalently, the physical nuisance directions
\begin{equation}
 p_i=R^{-1/2}q_i
 \label{eq:physical-mode}
\end{equation}
solve the generalized eigenproblem
\begin{equation}
 V_{\alpha\alpha}R\,p_i=\lambda_i p_i,
 \qquad
 p_i^{\trans}Rp_j=\delta_{ij}.
 \label{eq:generalized-eigenproblem}
\end{equation}

Let
\begin{equation}
 C=V_{\theta\alpha}R^{1/2}Q,
 \qquad c_i=C e_i .
 \label{eq:C-mode-coupling}
\end{equation}
The global-observable response and induced nuisance response become
\begin{equation}
 \frac{\partial\widehat\theta}{\partial b}=C,
 \qquad
 \frac{\partial\widehat\beta}{\partial b}=\Lambda.
 \label{eq:diagonal-GO-response}
\end{equation}
A unit auxiliary fluctuation in mode \(i\) therefore gives
\begin{equation}
 \Delta\theta_{\GO,i}=c_i,
 \qquad
 \Delta\beta_{\GO,i}=\lambda_i e_i .
 \label{eq:mode-GO}
\end{equation}
A one-post-fit-standard-deviation shift in the corresponding nuisance mode is
\(\Delta\beta_i=\sqrt{\lambda_i}e_i\), and hence
\begin{equation}
 \Delta\theta_{\NP,i}^{1\sigma}
 =
 \frac{c_i}{\sqrt{\lambda_i}}.
 \label{eq:mode-NP}
\end{equation}
The mode-by-mode relation is
\begin{equation}
 \Delta\theta_{\GO,i}
 =
 \sqrt{\lambda_i}\,
 \Delta\theta_{\NP,i}^{1\sigma},
 \qquad
 \Sigma_{\theta,i}^{[\GO]}
 =
 \lambda_i\Sigma_{\theta,i}^{[\NP]}.
 \label{eq:mode-scale}
\end{equation}
Figure~\ref{fig:eigenmodes} depicts the two lengths in whitened nuisance
space and their images in the POI tangent space.

\begin{figure}[t]
 \centering
 \includegraphics[width=0.7\linewidth]{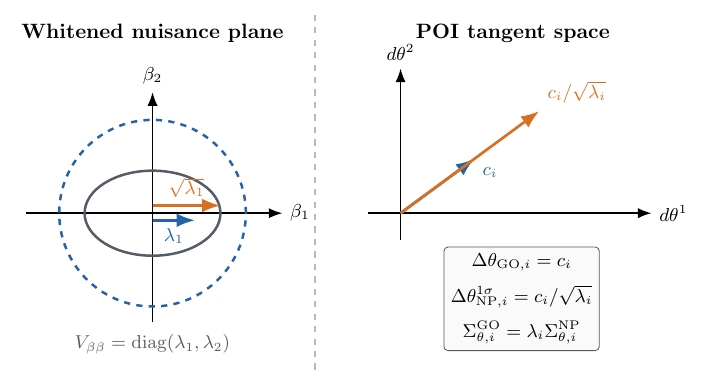}
 \caption{Mode-wise comparison after whitening by the auxiliary metric.
 The post-fit nuisance covariance has principal squared radii \(\lambda_i\).
 A unit auxiliary fluctuation moves the fitted mode by \(\lambda_i\), whereas
 a one-post-fit-\(\sigma\) nuisance shift has length \(\sqrt{\lambda_i}\).
 Their POI images differ by the same factor.}
 \label{fig:eigenmodes}
\end{figure}

Summing the rank-one mode tensors gives
\begin{align}
 \Sigma_\theta^{[\NP]}
 &=
 C\Lambda^{-1}C^{\trans}
 =
 \sum_i\frac{c_ic_i^{\trans}}{\lambda_i},
 \label{eq:NP-mode-sum}\\
 \Sigma_\theta^{[\GO]}
 &=
 CC^{\trans}
 =
 \sum_i c_ic_i^{\trans}.
 \label{eq:GO-mode-sum}
\end{align}
If the total expected information has the form
\begin{equation}
 g=g_0+E_\alpha R E_\alpha^{\trans},
 \qquad g_0\succeq0,
 \label{eq:lambda-bound-assumptions}
\end{equation}
then \(0<\lambda_i\leq1\).  The eigenvalue is the remaining post-fit variance
of mode \(i\), measured in units of its auxiliary variance.

Equation~\eqref{eq:NP-mode-sum} is equivalently
\begin{equation}
 \Sigma_\theta^{[\NP]}
 =\sum_i
 \Delta\theta_{\NP,i}^{1\sigma}
 (\Delta\theta_{\NP,i}^{1\sigma})^{\trans}.
 \label{eq:NP-impact-quadrature}
\end{equation}
For one POI, this is precisely a sum in quadrature of the principal-mode NP
impacts.  It is a valid additive decomposition of the profiling geometry,
but not, in general, a decomposition by physical uncertainty source.  The
rotated modes are fit-dependent combinations of the named auxiliary
measurements.  Nondegenerate generalized eigendirections are fixed by the pair
\((V_{\alpha\alpha},R)\), but within a degenerate eigenspace their individual
rank-one terms can be redistributed by a rotation.  In either case, a mode
label is not a source label.  Source-by-source reporting should remain in the
physically defined observation basis unless a principal-mode summary is
explicitly intended.

\section{The linear Gaussian model and the covariance representation}
\label{sec:gaussian-model}

The linear Gaussian model makes contact with the formulas of
Ref.~\cite{Pinto2024Uncertainty} and provides a direct check of the geometric
construction.  Let
\begin{equation}
 \ell(\theta,\alpha\given m,a)
 =
 \frac12 r^{\trans}S^{-1}r
 +
 \frac12(\alpha-a)^{\trans}R(\alpha-a),
 \label{eq:linear-gaussian-likelihood}
\end{equation}
with
\begin{equation}
 r=m-t_0-X\theta-\Gamma\alpha .
 \label{eq:linear-residual}
\end{equation}
Here \(S\) is the statistical covariance of the primary measurements,
\(X\) is the POI design matrix, and \(\Gamma\) carries nuisance shifts into
measurement space.

Writing \(\beta=\alpha-a\), define the shifted residual
\begin{equation}
 r_a(\theta)=m-t_0-X\theta-\Gamma a.
 \label{eq:shifted-linear-residual}
\end{equation}
Profiling \(\beta\) gives
\begin{equation}
 \ell_{\prof}(\theta)
 =
 \frac12 r_a(\theta)^{\trans}C^{-1}r_a(\theta)
 +\mathrm{const.},
 \label{eq:covariance-representation}
\end{equation}
where
\begin{equation}
 C=S+\Gamma R^{-1}\Gamma^{\trans}.
 \label{eq:total-measurement-covariance}
\end{equation}
The estimator is
\begin{equation}
 \begin{aligned}
  \widehat\theta&=L(m-t_0-\Gamma a),\\
  L&=(X^{\trans}C^{-1}X)^{-1}X^{\trans}C^{-1}.
 \end{aligned}
 \label{eq:linear-estimator}
\end{equation}
Consequently,
\begin{equation}
 J_{\theta\leftarrow m}=L,
 \qquad
 J_{\theta\leftarrow a}=-L\Gamma .
 \label{eq:linear-source-Jacobians}
\end{equation}
The source covariance pushforwards are
\begin{align}
 \Sigma_\theta^{[\stat]}
 &=LSL^{\trans},
 \label{eq:linear-stat-covariance}\\
 \Sigma_\theta^{[\syst]}
 &=L\Gamma R^{-1}\Gamma^{\trans}L^{\trans},
 \label{eq:linear-syst-covariance}\\
 \Sigma_\theta^{[\mathrm{tot}]}
 &=LCL^{\trans}
 =(X^{\trans}C^{-1}X)^{-1}.
 \label{eq:linear-total-covariance}
\end{align}
They close exactly because \(C\) is the sum of the source covariances.

Comparing Eq.~\eqref{eq:linear-source-Jacobians} with the general
global-observable response gives
\begin{equation}
 V_{\theta\alpha}R=-L\Gamma.
 \label{eq:cross-covariance-identity}
\end{equation}
For standardized independent nuisances, \(R=I\), the POI--nuisance
cross-covariance is therefore precisely the signed systematic response vector
to a unit global-observable shift.  Its outer product reproduces the
systematic covariance component.  This is the geometric content of the
cross-covariance and shifted-observable identities derived in
Ref.~\cite{Pinto2024Uncertainty}.  It also shows why the answer is not obtained
by propagating the full post-fit covariance \(V_{\alpha\alpha}\): the latter
includes nuisance motion induced by both primary and auxiliary data. 
This is the linear-Gaussian realization of the general score-kick argument
following Eq.~\eqref{eq:GO-score-kick-pushforward}.

\section{Unconstrained nuisance parameters}
\label{sec:unconstrained}

An unconstrained nuisance coordinate \(\gamma\) remains a vertical direction
of the parameter bundle.  It may be well measured by primary data and can be profiled in the usual way.  What it lacks is a corresponding
auxiliary-observation direction and auxiliary score metric.  There is no
\(a_\gamma\), no precision \(R_\gamma\), and hence no
global-observable covariance of the form
\(V_{\theta\gamma}R_\gamma V_{\gamma\theta}\).

If \(\gamma\) is fixed and all other coordinates \(u\) are refitted, its local
profiling penalty is
\begin{equation}
 \Delta V_\theta^{[\prof\,\gamma]}
 =
 V_{\theta\gamma}V_{\gamma\gamma}^{-1}V_{\gamma\theta}.
 \label{eq:unconstrained-profiling-penalty}
\end{equation}
This is meaningful as a comparison between floating and fixing a parameter
subspace.  It is not automatically a contribution from a physical uncertainty
source.  

When \(\gamma\) is inferred from control-region counts or other
primary observations, the source-consistent statistical component is instead
obtained by pushing the covariance of those observations through
Eq.~\eqref{eq:source-pushforward}. For example, consider a signal-region count and a control-region count,
\begin{equation}
 n_{\mathrm{SR}}\sim
 \operatorname{Pois}(\theta s+\gamma b_{\mathrm{SR}}),
 \qquad
 n_{\mathrm{CR}}\sim
 \operatorname{Pois}(\gamma b_{\mathrm{CR}}),
 \label{eq:unconstrained-control-region-example}
\end{equation}
with no auxiliary constraint on the background normalization \(\gamma\).
Both counts are primary observations.  Fixing \(\gamma\) measures the local
profiling penalty
\(V_{\theta\gamma}^{2}/V_{\gamma\gamma}\), whereas the statistical variance
carried by the control-region count is
\begin{equation}
 \Sigma_\theta^{[\mathrm{CR}]}
 =
 \left(\frac{\partial\widehat\theta}{\partial n_{\mathrm{CR}}}\right)^2
 \Var(n_{\mathrm{CR}}).
 \label{eq:control-region-source-covariance}
\end{equation}
These quantities need not be equal: the first asks what is gained by fixing a
fitted parameter, while the second fluctuates a specified observation source
and allows the complete fit, including \(\gamma\), to respond. In practice, it is not always possible to identify an individual control region that constrains each floating normalization.

For several unconstrained nuisances, the contribution assigned to an
individual coordinate depends on the chosen basis and, for sequential
freeze/remove procedures, potentially on the ordering.  A specified nuisance
\emph{subspace} has an invariant profiling penalty under reparameterizations
within that subspace, but an NP-by-NP attribution is not intrinsic.
Constrained and unconstrained directions should not be mixed in the
generalized eigenproblem of Sect.~\ref{sec:eigenmodes} when the goal is
auxiliary-source attribution.

\section{Beyond the local Gaussian regime}
\label{sec:beyond-local}

Along a finite global-observable path \(a(t)\), the local response evolves as
\begin{equation}
 \frac{\dd\widehat z}{\dd t}
 =
 V\bigl(\widehat z(t),a(t)\bigr)^\sharp
 \mathsf K_a\bigl(\widehat z(t),a(t)\bigr)
 \frac{\dd a}{\dd t}.
 \label{eq:finite-response-ode}
\end{equation}
The eigenbasis \(Q\), eigenvalues \(\Lambda\), and POI coupling \(C\) all vary
along the path.  A fixed best-fit eigenmode therefore mixes with other modes
under a finite displacement even in a flat manifold but curvature adds
path-dependence to parallel identification of frames.  

At boundaries, at singular points, or when the optimizer jumps between
modes, a smooth estimator section may not exist.  In those regimes, the
bundle picture remains a useful stratified intuition, but the differential
identities cannot be used without a problem-specific treatment. In practice that means that the non-regular analogue of the local pushforward is ordinarily a finite-refit or ensemble construction, rather than a modified matrix identity.

\section{Conclusions}
\label{sec:dictionary}

Table~\ref{tab:dictionary} summarizes the operations that are often grouped
under the word impact.

\begin{table*}[t]
\caption{Geometric classification of common uncertainty operations.}
\label{tab:dictionary}
\centering
\begin{tabular}{>{\raggedright\arraybackslash}p{0.16\textwidth}
                >{\raggedright\arraybackslash}p{0.18\textwidth}
                >{\raggedright\arraybackslash}p{0.25\textwidth}
                >{\raggedright\arraybackslash}p{0.30\textwidth}}
\toprule
Procedure & Initial object & Geometric operation & Question answered\\
\midrule
Profiling
& \(d\theta\in T\mathcal M\)
& Information-orthogonal horizontal lift; restrict the metric to the
  profile section
& What information remains after the nuisance fiber is allowed to relax?\\
\addlinespace
NP shift and refit
& Assigned \(d\alpha\in T\mathcal F\)
& Conditional-refit embedding followed by POI projection,
  \(J_{\theta\leftarrow\alpha}d\alpha\)
& How sensitive is the fit to this chosen fitted-parameter displacement?\\
\addlinespace
Freeze/remove impact
& Two parameter submodels or sections
& Compare inverse horizontal metrics with a nuisance block floating and fixed
& What is the local covariance cost of allowing this subspace to float?\\
\addlinespace
Global-observable shift
& \(da\in T\mathcal A\)
& Inject a score covector, raise it with \(V\), and project to POIs
& What repeated-experiment variance is generated by this auxiliary source?\\
\addlinespace
Primary-data shift
& \(dm\in T\mathcal D\)
& Push the data covariance through \(D_m\widehat\theta\)
& What repeated-experiment statistical variance is generated by these data?\\
\bottomrule
\end{tabular}
\end{table*}

The following steps are the recommended minimum for a source-consistent
measurement decomposition:
\begin{enumerate}
\item Fit the complete model and compute \(G\) and \(V=G^{-1}\) at the nominal
      optimum.  Within the local Hessian approximation,
      \(V_{\theta\theta}\) is the total POI covariance of the full fit and
      therefore the closure target for the decomposition.  Record whether
      observed or expected information is used;
\item Partition constrained and unconstrained nuisance directions.  Determine
      the auxiliary covariance or precision for each physical source block.
\item Obtain and report source components from finite observation shifts or from
      \(J_s\Sigma_sJ_s^{\trans}\).
\item Report the final statistical-only component as the sum over primary-data
      source blocks,
      \begin{equation}
       \Sigma_\theta^{[\stat]}
       =\sum_{s\in\mathrm{primary}}
       J_s\Sigma_sJ_s^{\trans}
       \approx V_{\theta\theta}
        -\sum_{s\in\mathrm{auxiliary}}\Sigma_\theta^{[s]},
       \label{eq:workflow-statistical-component}
      \end{equation}
      The last comparison is exact for the corresponding expected-information identity and is an appropriate approximation for an observed local implementation. 
\item For an unconstrained nuisance \(\gamma\), do not assign an auxiliary
      source component that does not exist.  Its constraining primary
      observations are already included in
      Eq.~\eqref{eq:workflow-statistical-component}.  If a summary of its
      parameter-space effect is useful, report the profiling penalty in
      Eq.~\eqref{eq:unconstrained-profiling-penalty} and label it as such.
\item Test nonlinearity using multiple shift sizes, both signs, or ensemble
      propagation.  Near boundaries, singularities, or optimizer branch
      changes, replace the local Hessian propagation by the finite-refit or
      simulation-calibrated procedures described in
      Sect.~\ref{sec:beyond-local}.
\end{enumerate}

Whitened nuisance eigenmodes, conventional one-post-fit-\(\sigma\) NP shifts,
and freeze/remove impacts are optional diagnostics; they are not required to
report a physical-source decomposition.  In principal mode \(i\), a unit
auxiliary fluctuation induces a nuisance displacement \(\lambda_i e_i\),
whereas the conventional post-fit NP shift is \(\sqrt{\lambda_i}e_i\).
Therefore a measurement that aims only at source attribution should use the
observation shifts in the required workflow above.  Eigenmodes or NP impacts
may still be reported when sensitivity or profiling diagnostics are useful,
provided they are clearly labeled and kept separate from source components.

The geometric distinction between parameters and observations resolves the
apparent tension among common profile-likelihood uncertainty procedures.
Profiling belongs to the parameter bundle.  The Fisher or observed information
selects horizontal directions orthogonal to the nuisance fibers, and the
Schur complement is the resulting relaxed metric on POI displacements.

Nuisance shift-and-refit impacts belong to a conditional geometry on the same
parameter space.  They propagate a chosen tangent vector through a
conditional optimizer and are naturally suited to sensitivity and profiling
questions. Uncertainty components belong instead to the observation bundle.  A physical source fluctuation changes the parameter score; the inverse total information raises that score covector into an estimator displacement.  Its covariance is then pushed forward to the POI manifold.  In the expected-information
language, each source contributes \(d\pi\,g^{-1}g_sg^{-1}d\pi^{\trans}\),
which sums to the total POI covariance under the information identity.
The shifted-global-observable method is the auxiliary-measurement realization
of this source geometry.

A global-observable shift moves the fitted nuisance parameter, and the resulting POI response factors exactly at first order through the conditional nuisance response.  They agree when the nuisance displacement used in the shift-and-refit procedure is the one actually induced by the auxiliary fluctuation.  A conventional one-post-fit-standard-deviation shift uses a different displacement, explaining the scale mismatch and the different covariance decompositions.

\section*{Data availability}
No datasets were generated or analyzed as part of this study.

\section*{Acknowledgments}
The work of RCLSA is partially supported by the US Department of Energy award DE-SC0010004.

\begingroup
\bibliography{references}

@article{Pinto2024Uncertainty,
  author        = {Pinto, Andr{\'e}s and Wu, Zhibo and Balli, Fabrice and Berger, Nicolas and Boonekamp, Maarten and Chapon, {\'E}milien and Kawamoto, Tatsuo and Malaescu, Bogdan},
  title         = {Uncertainty components in profile likelihood fits},
  journal       = {Eur. Phys. J. C},
  volume        = {84},
  number        = {6},
  pages         = {593},
  year          = {2024},
  doi           = {10.1140/epjc/s10052-024-12877-5},
  eprint        = {2307.04007},
  archivePrefix = {arXiv},
  primaryClass  = {physics.data-an}
}

@article{Rao1945Information,
  author  = {Rao, C. Radhakrishna},
  title   = {Information and the Accuracy Attainable in the Estimation of Statistical Parameters},
  journal = {Bull. Calcutta Math. Soc.},
  volume  = {37},
  pages   = {81--91},
  year    = {1945}
}

@book{Amari2016InformationGeometry,
  author    = {Amari, Shun-ichi},
  title     = {Information Geometry and Its Applications},
  series    = {Applied Mathematical Sciences},
  volume    = {194},
  publisher = {Springer},
  address   = {Tokyo},
  year      = {2016},
  doi       = {10.1007/978-4-431-55978-8},
  isbn      = {978-4-431-55978-8}
}

@book{AmariNagaoka2000Methods,
  author    = {Amari, Shun-ichi and Nagaoka, Hiroshi},
  title     = {Methods of Information Geometry},
  series    = {Translations of Mathematical Monographs},
  volume    = {191},
  publisher = {American Mathematical Society},
  address   = {Providence, RI},
  year      = {2000},
  doi       = {10.1090/mmono/191}
}

@article{CoxReid1987Orthogonality,
  author  = {Cox, D. R. and Reid, N.},
  title   = {Parameter Orthogonality and Approximate Conditional Inference},
  journal = {J. R. Stat. Soc. B},
  volume  = {49},
  number  = {1},
  pages   = {1--18},
  year    = {1987},
  doi     = {10.1111/j.2517-6161.1987.tb01422.x}
}

@article{MurphyVanDerVaart2000Profile,
  author  = {Murphy, S. A. and van der Vaart, A. W.},
  title   = {On Profile Likelihood},
  journal = {J. Am. Stat. Assoc.},
  volume  = {95},
  number  = {450},
  pages   = {449--465},
  year    = {2000},
  doi     = {10.1080/01621459.2000.10474219}
}

@article{CowanEtAl2011Asymptotic,
  author        = {Cowan, Glen and Cranmer, Kyle and Gross, Eilam and Vitells, Ofer},
  title         = {Asymptotic Formulae for Likelihood-Based Tests of New Physics},
  journal       = {Eur. Phys. J. C},
  volume        = {71},
  number        = {2},
  pages         = {1554},
  year          = {2011},
  doi           = {10.1140/epjc/s10052-011-1554-0},
  eprint        = {1007.1727},
  archivePrefix = {arXiv},
  primaryClass  = {physics.data-an}
}

@article{ATLAS:2025clx,
    author = "{ATLAS Collaboration}",
    collaboration = "ATLAS",
    title = "{An implementation of neural simulation-based inference for parameter estimation in ATLAS}",
    eprint = "2412.01600",
    archivePrefix = "arXiv",
    primaryClass = "physics.data-an",
    reportNumber = "CERN-EP-2024-305",
    doi = "10.1088/1361-6633/add370",
    journal = "Rept. Prog. Phys.",
    volume = "88",
    number = "6",
    pages = "067801",
    year = "2025"
}
\endgroup
\appendix

\section{General auxiliary constraints}
\label{app:general-constraints}

In this note, all examples are given with Gaussian constraints, which represent the most common case in high-energy physics. In general, let the auxiliary negative log-likelihood be a smooth function \(c(\alpha,a)\), not necessarily Gaussian.  Define
\begin{equation}
 K_{\alpha a}
 =
 -\left.
 \frac{\partial^2c}{\partial\alpha\,\partial a}
 \right|_{\widehat\alpha,a}.
 \label{eq:general-K}
\end{equation}
Then the local responses are
\begin{equation}
 D_a\widehat\theta
 =
 V_{\theta\alpha}K_{\alpha a},
 \qquad
 D_a\widehat\alpha
 =
 V_{\alpha\alpha}K_{\alpha a}.
 \label{eq:general-aux-response}
\end{equation}
The factorization through the conditional nuisance response remains
\begin{equation}
 D_a\widehat\theta
 =
 \left(V_{\theta\alpha}V_{\alpha\alpha}^{-1}\right)
 D_a\widehat\alpha .
 \label{eq:general-factorization}
\end{equation}
For auxiliary covariance \(\Sigma_a\), the source contribution is
\begin{equation}
 \Sigma_\theta^{[a]}
 =
 V_{\theta\alpha}K_{\alpha a}\Sigma_a
 K_{\alpha a}^{\trans}V_{\alpha\theta}.
 \label{eq:general-aux-covariance}
\end{equation}
The Gaussian location constraint is the special case
\(K_{\alpha a}=R\) and \(\Sigma_a=R^{-1}\).

\end{document}